\newcommand*\Let[2]{\State #1 $\gets$ #2}
\newtheorem{proposition}{Proposition}
\newtheorem{theorem}{Theorem}
\newtheorem{definition}{Definition}
\renewcommand{\P}{\mathbb{P}}
\newcommand{\Zplus}{\mathbb{Z}^+}
\newcommand{\Rplus}{\mathbb{R}^+}
\newcommand{\leaf}{\mathrm{leaf}}
\renewcommand{\root}{\mathrm{root}}
\DeclareMathOperator{\argsup}{arg\,sup}
\newcommand{\OPT}{\mathrm{OPT}}
\def\PM#1#2{\P_{{[#1]}}^{#2}}
\title{An Approximation Algorithm for Ancestral Maximum-Likelihood and Phylogeography Inference Problems under Time Reversible Markov Evolutionary Models}
\author{Mohammad-Hadi Foroughmand-Araabi, Sama Goliaei, Kasra Alishahi}
\begin{document}

\maketitle
% \IEEEdisplaynontitleabstractindextext
% \IEEEpeerreviewmaketitle

% \IEEEtitleabstractindextext{%
\begin{abstract}
The ancestral maximum-likelihood and phylogeography problems are two fundamental problems involving evolutionary studies. The ancestral maximum-likelihood problem involves identifying a rooted tree alongside internal node sequences that maximizes the probability of observing a given set of sequences as leaves. The phylogeography problem extends the ancestral maximum-likelihood problem to incorporate geolocation of leaf and internal nodes.
While a constant factor approximation algorithm has been established for the ancestral maximum-likelihood problem concerning two-state sequences, no such algorithm has been devised for any generalized instances of the problem. In this paper, we focus on a generalization of the two-state model, the time reversible Markov evolutionary models for sequences and geolocations. Under this evolutionary model, we present a $2\log_2 k $-approximation algorithm, where $k$ is the number of input samples, addressing both the ancestral maximum-likelihood and phylogeography problems. This is the first approximation algorithm for the phylogeography problem. Furthermore, we show how to apply the algorithm on popular evolutionary models like generalized time-reversible (GTR) model and its specialization Jukes and Cantor 69 (JC69).
%Also we show that these problems do not admit any $(1-o(1)) \ln k$-approximation algorithm, unless NP has quasi-polynomial time algorithms.
\textbf{Keywords: } Phylogeny inference, Phylogeography, Approximation Algorithm, Node weighted Steiner tree, Ancestral maximum-likelihood
\end{abstract}

% Note that keywords are not normally used for peerreview papers.
% \begin{IEEEkeywords}
% Phylogeny inference, Phylogeography, Approximation Algorithm, Node weighted Steiner tree, Ancestral maximum-likelihood
% \end{IEEEkeywords}}

\section{Introduction}
The phylogeny inference problem is of common use for biologists in analysis of evolutionary relations between a set of given samples. In this problem the information carried through changes in sequences are used as a basis to infer the evolutionary history, represented as a tree, that leads to the given samples. The ancestral maximum-likelihood inference problem consists of two adaptations, one on the objective function and one on the search space. If the desired output of the problem contains not only the inferred tree, but also internal node information (inferred internal node sequences), the problem is called ancestral phylogeny inference problem. In addition, if the objective function of the problem is to maximize the likelihood of the output, the problem is called ancestral maximum-likelihood problem. 

The phylogeography inference problem is a problem similar to the ancestral maximal-likelihood problem, which the difference that in the phylogeography problem not only biological sequences, but also geographic information regarding the samples are included in the tree inference problem. The input of the phylogeography inference problem contains geolocation of the input samples and the output contains geolocation for internal nodes. Phylogeography inference problem has applications in several areas including pandemic analysis, e.g. Flu \cite{reimering2020phylogeographic} and COVID-19 \cite{lemey2020accommodating} pandemic analysis through the globe or inside a specific region. 

Although, a lot of heuristics, including the well-known neighbor joining algorithms \cite{felsenstein2004inferring}, are already provided for the phylogeny inference-related problems, only a few algorithms and results with performance guarantees are provided. 
Alon et al. \cite{alon2008approximate} provide a $16/9$-approximation algorithm for the special case of the problem that contains two-state characters with symmetric mutation rate. The case for the JC69 DNA evolution model is mentioned in their paper as an open problem, both for providing an approximation algorithm and inapproximability results. 

Special cases of the problem, including perfect phylogeny \cite{gusfield2009three} and near perfect phylogeny \cite{fernandez2003polynomial,blelloch2006fixed,sridhar2007algorithms,awasthi2012additive}, and ultrametric phylogenies \cite{wu1999approximation} are studied thoroughly. For a survey on different versions of the phylogeny inference problem, their fixed-parameter algorithms, and impossibility results please refer to \cite{gramm2008fixed}.
%\cite{roch2006short}

In this paper, we provide a $2\log_2 k$-approximation algorithm for the phylogeography problem, under the reversible stationary irreducible Markov processes evolutionary assumption. Our work is an extension of \cite{alon2008approximate} over the general time reversible Markov models of evolution, instead of the limited symmetric two state phylogeny model, for the phylogeny problems. Our method requires access to functions evaluating the probability (cost) of an edge given its endpoint sequences. 
Also, we show how to calculate these probabilities for the time reversible Markov models as well as phylogeography in which the transition matrix between geological locations is defined as a random walk.

\section{The ancestral maximum-likelihood problem}
The ancestral maximum-likelihood problem is the problem of finding a phylogenetic tree as well as an assignment of sequences to its internal nodes with the input sequences assigned to its leaves. The input sequences are sequences of length $n$ from a fixed alphabet. Here we define a more general version of the problem by assuming that $i$-th element of the sequences (sites) is from the alphabet set $\Sigma_i$. We need this general version to define the phylogeography problem as a specialization of the ancestral maximum-likelihood problem later. 

%The maximum likelihood-based phylogeny inference problem seeks for a tree $T$ with maximum probability. 
%\begin{problem}[Phylogeny problem (general version)] \label{prob:phylo:general}
%Given a set of sequences $\mathcal{S} \subseteq \Sigma^{\otimes_1^n}$, the goal of the phylogeny problem is to find a directer binary tree $T = (V,E)$ and a function $\lambda : V \rightarrow \Sigma^{\otimes_1^n}$ such that
%    \begin{itemize}
%        \item On tree $T$'s leaves $\leaf(T)$, $\mathcal{S} = \{\lambda(v) | v \in \leaf(T)\}$.
%        \item $\P[\root(\lambda(T))] \cdot \prod_{(u,v) \in E} \P[\lambda(u) | \lambda(v)]$ is maximized.
%    \end{itemize}
%\end{problem}
%

Let $\Sigma^{\otimes_1^n} := \otimes_{i=1}^n \Sigma_i$ be the set of all possible sequences, which consists of all sequences of length $n$ with $i$-th elements from the alphabet $\Sigma_i$.
To define a concrete evolutionary model, we have to define specific probability functions $\P(x, y)$ and $\P(x)$, the probability of reaching state $y$ from state $x$, and the probability of being present at state $x$, for sequences $x,y \in \Sigma^{\otimes_1^n}$. 
The probability function $\P(x, y)$ represents the underlying model of evolution and the most popular models assume different sites, i.e., different elements of $n$-element sequences, to evolve similarly and independently, thus, $\P(x, y) = \prod_i \P(x[i], y[i])$. 
Model of evolution of each site, which is called a substitution model, is normally assumed to follow a time reversible Markov model.
Among the popular DNA substitution models we can name JC69, K80, HKY85, and GTR models and among protein substitution models we can name PAM and BLOSUM models, which are all instances of time reversible Markov models.

It is usual to model the substitution model as a reversible stochastic Markov process. Although normally sequence positions are assumed to follow similar Markov processes, we consider the case that different sequence sites follow different Markov models.
% With these assumptions.
Let $\PM{i}{t}(a,b)$ be the substitution probability of conversion of $a$ to $b$ in time interval $t$, for sequence elements $a$ and $b$ of $i$-th element (site) of the sequence (e.g. nucleic acids or amino acids).
%for duration of time $t$ which consists of elements $P(t)[i, j]$ that represent the probability for a sequence symbol (e.g. nucleic acids or amino acids) $j$ to become sequence symbol $j$ after time interval of length $t$.
%For all these Markov models the probability of substitution for a duration of time $t$ is calculated as $U^{-1}\exp(t\Lambda)U$ for a (unitary?) matrix $U$ and diagonal matrix $\Lambda$ which is described by the model. 

Also, prior information of $\P(x)$ is used as the probability for sequence $x$ to be the root of the tree. To have a proper reversible Markov model, the value $\P(x)$ is considered to be the stationary distribution probabilities $\pi(x)$, assuming that the evolutionary process is applied for a long time. In this paper we always use $\pi(x)$ as the probability of observing the root of the tree with sequence $x$.

%The value $\P[x]$ could be derived from a uniform distribution over all possible sequence, in the case that there is no information regarding the initial sequence. Another option for $\P[x]$ is the stationary distribution probabilities $\pi(x)$, assuming that the evolutionary process is applied for a long time. In this paper we always use $\pi(x)$ as the probability of observing root of the tree with sequence $x$.

%Given the substitution Markov model as matrices $U$ and $\Lambda$, to calculate the evolution probability function defined in \autoref{prob:phylo:general}, in addition to the tree $T$ and function $\lambda$ the time duration of each edge should be inferred. In this case, the probability $\P[\lambda(u) | \lambda(v)]$ could be rewritten as $\prod_{i=1}^{n} (U^{-1}\exp(t(e)\Lambda)U)[u[i], v[i]]$. 
%%Thus, the objective function of \autoref{prob:phylo:general} is 
%%\begin{align*}
%%\prod_{e = (u,v) \in E} \prod_{i=1}^{n} (U^{-1}\exp(t(e)\Lambda)U)[u[i], v[i]] \\
%%= \prod_{i=1}^{n} \prod_{e = (u,v) \in E} (U^{-1}\exp(t(e)\Lambda)U)[u[i], v[i]] \\
%%= \prod_{i=1}^{n} \prod_{e = (u,v) \in E} (U^{-1}\exp(t(e)\Lambda)U)[u[i], v[i]] \\
%%\end{align*}

Now we can define a general Markov version of the ancestral maximum-likelihood. Note that, instead of maximizing the likelihood probability, we minimize the minus logarithm of the probability, which leads to a similar solution. The problem definition is based on \cite{alon2008approximate}.
\begin{definition}[Ancestral maximum-likelihood problem (Markov version)\protect\footnote{Note that the phylogeny defined in \autoref{prob:phylo:markov} for time reversible Markov processes, is known as undirected phylogeny in the field of phylogeny analysis \cite{felsenstein2004inferring}. This naming is based on the fact that changing the root and directions of the edges (as it could be obtained by \autoref{def:weight-cost}) does not change the cost of the tree. In other words, the only information that could be obtained from an inferred phylogeny of this problem is the structure of the tree, while the actual root of the tree and directions of edges do not provide any meaningful information.}] \label{prob:phylo:markov}
Given a set of sequences $\mathcal{S} \subseteq \Sigma^{\otimes_1^n}$ and a Markov process for substitution, the goal of the ancestral maximum-likelihood problem is to find a (binary) directed tree $T = (V,E)$, a function $\lambda : V \rightarrow \Sigma^{\otimes_1^n}$, and branch length (time duration) assignment $t : E \rightarrow \Rplus$ such that
    \begin{itemize}
        \item On tree $T$'s leaves $\leaf(T)$, $\mathcal{S} = \{\lambda(v) : v \in \leaf(T)\}$.
        \item The following objective function is minimized 
          $$
        		- \sum_{(u,v) \in E} \sum_{i=1}^{n} \log \PM{i}{t(u,v)}(\lambda(u)[i], \lambda(v)[i]) 
		-\log \P[\root(\lambda(T))] $$
    \end{itemize}
\end{definition}

The underlying evolutionary process is normally assumed to be derived from an irreducible and reversible Markov chain. In other words, it is assumed that every state is reachable from every other state (irreducibly) and $\pi(x) \P(x, y) = \pi(y) \P(y,x)$ (reversibility). We assume that these two properties holds for the biological evolutionary model in this paper.

A simplified version of \autoref{prob:phylo:markov} is the case with: a) only two symbols ($|\Sigma_i|=2$ for all $i=1, \dots, n$), same evolutionary model for all the sites, i.e., $\PM{i}{t}(x, y) = \PM{j}{t}(x, y)$, and c) equal probability of mutation, i.e., $\PM{i}{t}(x, y) = \PM{i}{t}(y, x)$. This simplified case could be useful in modeling short time evolution, for example in single nucleotide variants (SNV) modeling. 
In this version, the evolution probability of each edge could be represented by one variable $p_e$ for each edge $e$. Thus, the Markov model of substitution could be removed from the problem inputs. In this case, for non-equal sequence symbols $x[i]$ and $y[i]$ a term $\log p_e$ and for equal symbols a term $\log (1-p_e)$ appear in the objective function. Thus, the objective function for this problem, up to a constant factor ($\pi(x)$ for some input sequence $x$ after re-rooting tree on $x$), would be
\begin{align}
\prod_{e \in E} p_e^{d_e} (1-p_e)^{n-d_e} \label{eq:simple:objective1}
\end{align}
where $d_e$ is the Hamming distance between $\lambda(u)$ and $\lambda(v)$. 
In this simple version, for a known tree $T$ and a function $\lambda$, optimal values $d_e$ could be calculated directly. Given these variables, the maximum of the objective function in \autoref{eq:simple:objective1} is obtained via $p_e = d_e/n$. Also, after taking $n$-th root of \autoref{eq:simple:objective1}, applying logarithm function, and multiplying by $-1$ we obtain the following objective function to be minimized:
\begin{align}
\sum_{e = (u,v) \in E} - H(d_e/n) \label{eq:simple:objective2}
%\frac{d_e}{n} \log p_e + (1-\frac{d_e}{n}) \log (1-p_e)
\end{align}
where $H(p)$ is the entropy function $H(p) := p \log p + (1-p) \log (1-p)$.
This simple version is the version mentioned in \cite{alon2008approximate} for which a constant factor approximation algorithm is provided.
The maximum parsimony problem for the two-state sequences could be formulated as a phylogeny reconstruction problem with the objective function
%\begin{align}
$
\sum_{e \in E} (d_e/n) %\label{eq:simple2:objective2}
$
%\frac{d_e}{n} \log p_e + (1-\frac{d_e}{n}) \log (1-p_e)
%\end{align}
to be minimized.

Alon et al. \cite{alon2008approximate} showed that both of the two-state maximum likelihood and the two-state maximum parsimony problems could be reformulated as the problem of finding a Steiner tree in an $n$-dimensional hypercube graph. 
To be able to use approximation algorithms for the metric Steiner tree problem, they showed that the defined distances are metrics. Thus, an approximation algorithm for the metric Steiner tree problem is an approximation algorithm with exactly the same approximation factor for both of the problems. The only remaining issue was that the size of the graph is large. To deal with this issue, they incorporated an approximation algorithm that does not directly iterate over non-terminal vertices of the graph. The Steiner approximation algorithm provided by \cite{berman1994improved} for a fixed $k$ accesses the graph through an oracle function that given a set of $k$ terminals $S' \subseteq \mathcal{S}$ calculates optimum Steiner tree in the graph with terminal set $S'$. Alon et al. \cite{alon2008approximate} show that this oracle could be implemented efficiently for both the ancestral inference problems, and thus they provide a $16/9$-approximation for the problems.

To the best of our knowledge no approximation algorithm for the Markov version of ancestral maximum-likelihood problem is provided.

\section{The phylogeography problem} \label{sec:problem-phylogeography}
The phylogeography problem is a problem similar to the ancestral maximum-likelihood problem with the difference that we have geological information regarding the input samples. Incorporation of geographic locations could be useful in better reconstruction of the tree. On the other hand, inference of geographic location of internal vertices may provide useful information about movement of organisms through locations. For example, a phylogeny for SARS-CoV-2 annotated with geographic information may reveal the pattern of spread of SARS-CoV-2 through the globe during time. 

To model the geographic locations, a discrete set of geographic locations $L$ and a travel model should be assumed. The travel model defines the movement probability between two geographic locations. The travel model usually is modeled as a Markov process, as given the location of an infected person, next movement of the infection is independent of its past. We define the phylogeography problem as the problem of finding a tree $T$ and functions $\lambda : V \rightarrow \Sigma^n$ and $\ell : V \rightarrow L$ that maximize the likelihood function. 

We can reformulate the phylogeography problem as an ancestral maximum-likelihood problem.
Note that in the Markov version of the ancestral maximum-likelihood problem (\autoref{prob:phylo:markov}), we did not assume any restriction on the Markov processes for different elements of the sequence. Thus, the phylogeography problem could be reformulated as an ancestral maximum-likelihood problem, with one more sequence element. It means, the $n+1$-th sequence element evolves as a Markov process representing movements.

%No approximation algorithm for the ancestral maximum-likelihood problem is provided. 
Note that any approximation algorithm for the Markov version of ancestral maximum-likelihood problem directly leads to another approximation algorithm for the phylogeography problem. Since these two problems are equal, we only name the phylogeography problem in the rest of the paper.

In this paper we provide a general $2\log k$-approximation method for the Markov version of the ancestral maximum-likelihood problem, assuming that we have oracle access to calculation of sub-problems for the Markov process. Then, we show how to use the general method for some ancestral maximum-likelihood problems including phylogenetic tree inference for the JC69 substitution models, and the phylogeography problem. 
%Also, via reducing the set-cover problem to the phylogeny problem we show that there is no $(1-o(1))\ln k$-approximation algorithm for the phylogeny problem, unless NP has quasi-polynomial time algorithms.

\section{Approximation algorithm for phylogeography problem}
In this section, first we show how to deal with the time duration function of the phylogeography problem. Then, we reduce the problem to a node-weighted Steiner tree problem on a large graph. Afterwards, we observe that a $2 \log_2 k$-approximation algorithm could be simplified on the Markov models and converted to a minimum spanning tree algorithm. This shows that the minimum spanning tree algorithm on a graph with edge cost slightly different from the costs derived from the Markov model provides the required performance guarantee.

Since the time duration assignment function $t$ in the problem assigns independent values of $t_e$ to different edges, for each edge $e$, given other variables, the minimum cost of the tree is obtained by maximizing all the terms $\P^{t(u,v)}(\lambda(u),\lambda(v))$ independently. 

\begin{proposition}\label{obs:time-duration-max}
The optimal result of \autoref{prob:phylo:markov}, consists of independent edge probability maximizer of time durations, i.e., $t(u,v) = \argsup_t \P^{t}(\lambda(u),\lambda(v))$, for all the edges $(u,v)\in E(T)$.
\end{proposition}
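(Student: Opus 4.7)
The plan is to exploit the separable structure of the objective function in \autoref{prob:phylo:markov}. First I would observe that the objective decomposes into a sum of terms, one per edge, plus a root term that does not depend on $t$ at all. The root term $-\log \P[\root(\lambda(T))]$ involves only the labeling $\lambda$ and the tree structure, so it can be discarded for the purposes of optimizing $t$. The edge sum $-\sum_{(u,v) \in E}\sum_{i=1}^n \log \PM{i}{t(u,v)}(\lambda(u)[i],\lambda(v)[i])$ has the crucial property that $t(u,v)$ appears only inside the single summand indexed by the edge $(u,v)$, since $t$ assigns independent real values to different edges.

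Next I would argue that, holding $T$ and $\lambda$ fixed at their optimal values, the minimization over $t$ therefore splits into $|E|$ independent scalar minimizations, one per edge. For a single edge $(u,v)$, the objective contribution is
\begin{equation*}
-\sum_{i=1}^n \log \PM{i}{t(u,v)}(\lambda(u)[i],\lambda(v)[i]) \;=\; -\log \prod_{i=1}^n \PM{i}{t(u,v)}(\lambda(u)[i],\lambda(v)[i]) \;=\; -\log \P^{t(u,v)}(\lambda(u),\lambda(v)),
\end{equation*}
where the last equality uses the site-independence assumption $\P(x,y)=\prod_i \P(x[i],y[i])$ stated earlier in the paper. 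Since $-\log$ is strictly decreasing, minimizing this quantity over $t(u,v)$ is equivalent to maximizing $\P^{t(u,v)}(\lambda(u),\lambda(v))$, which gives the claimed identity $t(u,v)=\argsup_t \P^t(\lambda(u),\lambda(v))$.

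Finally, I would close by remarking that if some optimal solution used a different branch length $t(u,v)$ on some edge, replacing it with the edge-wise maximizer can only decrease the objective, contradicting optimality (or at worst yielding another optimum). In terms of difficulty, there is essentially no obstacle here: the proposition is a direct consequence of separability of the log-likelihood across edges together with site-independence of the substitution model. The only mild subtlety worth flagging is that the supremum may in principle fail to be attained (e.g.\ as $t \to \infty$ for sequences close to the stationary distribution), so a careful write-up might want to assume the supremum is attained or work with an infimizing sequence; but for the reversible irreducible Markov models used in the paper this is not an issue in practice.
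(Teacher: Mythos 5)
Your proposal is correct and follows essentially the same reasoning the paper gives (the paper justifies this proposition only by the one-sentence observation that $t$ assigns independent values to different edges, so each edge term can be optimized separately): you simply spell out the separability of the objective across edges and the monotonicity of $-\log$. The remark about the supremum possibly not being attained is a reasonable extra caution that the paper itself glosses over by writing $\argsup$.
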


Note that in the above proposition, supremum may be calculated over $\Rplus$ or $\Zplus$ based on the definition of the underlying Markov model. The method which is presented in this section, while has access to values $\sup_t \P^{t}(\lambda(u),\lambda(v))$ does not require knowledge of the underlying Markov process, whether it is discrete or continuous.

\begin{definition}[Costs of labeled trees] \label{def:weight-cost}
  Let $T$ be a tree with labeling $\lambda : V \rightarrow \Sigma^{\otimes_1^n}$. \emph{Cost} of a vertex $v$, an edge $(u,v)$, and a tree $T$ are defined as 
  \begin{align}
  \phi_\lambda(v) &:= -\log \pi[\lambda(v)] \\
  \phi_\lambda(u,v) &:= -\log \sup_t P^t(\lambda(u),\lambda(v)) \\
    &= -\log \sup_t \prod_{i=1}^n \PM{i}{t}(\lambda(u)[i],\lambda(v)[i]) \label{def:w-edge} \\
  \phi_\lambda(T) &:= \phi_\lambda(\root(T)) + \sum_{(u,v)\in E(T)} \phi_\lambda(u, v) 
  \end{align}
  % \emph{Differential cost} of the tree is defined as
  % \[ \phi'_\lambda(T) := \phi_\lambda(T) - \sum_{v \in \leaf(T)} \phi_\lambda(v) \]
  Note that when it is clear from the text we may omit $\lambda$ subscription.
  %  and use $\phi(T)$ 
  % and $\phi'(T)$ 
  % in place of $\phi_\lambda(T)$
  % and $c_\lambda(T)$, respectively.
\end{definition}

Following theorem is the basis of the reduction from the phylogeography problem to the node-weighted Steiner tree problem.
\begin{theorem}\label{thm:tree-cost-symmetric}
Cost of a given directed tree $T=(V(T), E(T))$ for a time reversible Markov model is equal to
\begin{equation}
  \sum_{(u,v)\in E(T)} w'(u,v) + \sum_{v \in V(T)} \phi(v) \label{eq:tree-cost-symmetric}
\end{equation}
where $w'(u,v) := \frac12(\phi(u,v)+\phi(v,u)-\phi(u)-\phi(v))$. Note that $w'(u,v)$ is symmetric.
\end{theorem}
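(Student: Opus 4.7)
The plan is to exploit the detailed-balance (reversibility) condition $\pi(x)\PM{}{t}(x,y) = \pi(y)\PM{}{t}(y,x)$, extend it sitewise to full sequences, and then read off a convenient expression for $w'(u,v)$. Since detailed balance holds for every $t$ and multiplicatively across sites, I would first note
\[
\pi[\lambda(u)] \sup_t P^t(\lambda(u),\lambda(v)) = \pi[\lambda(v)] \sup_t P^t(\lambda(v),\lambda(u)),
\]
because the factor $\pi[\lambda(u)]/\pi[\lambda(v)]$ on each side is independent of $t$, so the supremum on the right is attained at the same value of $t$ as on the left. Taking $-\log$ of both sides and translating into the notation of \autoref{def:weight-cost} gives the identity
\[
\phi(u) + \phi(u,v) \;=\; \phi(v) + \phi(v,u),
\]
which I would record as a lemma-style intermediate step.

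Next I would plug this identity into the definition of $w'$. Rewriting $\phi(v,u) = \phi(u,v) + \phi(u) - \phi(v)$ in $w'(u,v) = \tfrac{1}{2}(\phi(u,v)+\phi(v,u)-\phi(u)-\phi(v))$ telescopes to
\[
w'(u,v) \;=\; \phi(u,v) - \phi(v).
\]
(Symmetry of $w'$ is automatic from its original definition, and this step explains why the reversibility assumption is exactly what is needed for the symmetric form to equal this directional one.)

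Finally I would sum. Since $T$ is a directed rooted tree, each vertex $v \neq \root(T)$ is the head of exactly one edge, so $\sum_{(u,v)\in E(T)} \phi(v) = \sum_{v \in V(T)} \phi(v) - \phi(\root(T))$. Therefore
\[
\sum_{(u,v)\in E(T)} w'(u,v) + \sum_{v\in V(T)} \phi(v)
= \sum_{(u,v)\in E(T)} \phi(u,v) + \phi(\root(T))
= \phi(T),
\]
by the definition of $\phi(T)$, completing the proof.

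The only real subtlety is the first step: one must be careful that the supremum over $t$ commutes with the detailed-balance identity. Since the identity holds pointwise in $t$ and the prefactors $\pi[\lambda(u)],\pi[\lambda(v)]$ do not depend on $t$, the two suprema agree and are attained at a common $t^\star$; the sitewise product form $P^t = \prod_i \PM{i}{t}$ does not obstruct this because the reversibility is assumed at the site level (and hence lifts to sequences by multiplication). Everything else is a one-line algebraic rearrangement plus a counting argument on directed trees.
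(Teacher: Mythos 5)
Your proposal is correct and follows essentially the same route as the paper's proof: both derive the key identity $\phi(u)+\phi(u,v)=\phi(v)+\phi(v,u)$ from detailed balance (noting that the supremum over $t$ passes through the $t$-independent prefactors $\pi[\lambda(u)],\pi[\lambda(v)]$), use it to rewrite $w'(u,v)$ as $\phi(u,v)-\phi(v)$, and finish with the same counting argument that every non-root vertex is the head of exactly one edge. The only difference is cosmetic: you isolate the identity $w'(u,v)=\phi(u,v)-\phi(v)$ up front, while the paper performs the equivalent substitution inside the sum over edges.
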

\begin{proof}
  The time reversibility of the process ensures that
  \begin{equation}\label{eq:rev-eq}
  \pi(x) \P^t(x,y) = \pi(y) \P^t(y,x)
  \end{equation}
  Note that, if the stationary distribution of the process $\P$ exists, this distribution would be the stationary distribution for all the processes $\P^t$ for $t>0$. 
  \autoref{eq:rev-eq} implies that the time duration $t$ that maximizes the probability $\P^t(x,y)$ is equal to its counterpart for the reverse edge. Based on these facts, taking a logarithm from both sides of \autoref{eq:rev-eq}, we have
  \begin{equation}
    \phi(x) + \phi(x,y) = \phi(y) + \phi(y, x)
  \end{equation}
  
  After replacing $\phi(u,v)$ with $\frac12\phi(u,v)+\frac12\phi(u,v)$ and then one term with $\phi(v,u)+\phi(v)-\phi(u)$ (\autoref{eq:rev-eq}), the cost of the tree (\autoref{def:weight-cost}) would be
  \begin{align*}
    \phi(T) & = \phi(\root(T)) + \sum_{(u,v)\in E(T)} \phi(u, v) \\
    % \begin{split}
          & = \phi(\root(T)) + \sum_{(u,v)\in E(T)} \frac{\phi(u,v)+\phi(v,u)+\phi(v)-\phi(u)}{2} \\
    % \end{split} \\
    % \begin{split}
          & = \phi(\root(T)) + \sum_{(u,v)\in E(T)} \frac{\phi(u,v)+\phi(v,u)-\phi(v)-\phi(u)}{2}+\phi(v) \\
    % \end{split} \\
          & = \phi(\root(T)) + \sum_{(u,v)\in E(T)} \phi(v) + \sum_{(u,v)\in E(T)} w'(u,v) \\
          & = \sum_{v\in V(T)} \phi(v) + \sum_{(u,v)\in E(T)} w'(u,v)
  \end{align*}
  Last equality comes from the fact that the summation in the previous equality is over all the edges, so $\phi(v)$ appears once for each edge's endpoint and the only absent $\phi(v)$ is the $\phi(\root(T))$ which is present outside the summation.
\end{proof}

\autoref{thm:tree-cost-symmetric} shows that the phylogeography problem is a special case of a node-weighted Steiner tree problem. Note that \autoref{eq:tree-cost-symmetric} represents a node and edge weight Steiner tree problem, but edge weights could be converted to node weights by replacing each edge with a path of length 2 and the edge's weight assigned to the newly added node.

We reduced the phylogeography problem to a node-weighted Steiner tree problem, we can try to use an approximation algorithm for it. However, the graph on which the Steiner graph is defined is the graph with all potential sequences of length $n$. Thus, an appropriate algorithm that could be adapted to be executed in polynomial time on this graph should be incorporated.

We analyze the algorithm provided in \cite{klein1995nearly} and show how we can change it to be able to solve the ancestral maximum-likelihood problem for time reversible Markov evolutionary models. The algorithm provided in \cite{klein1995nearly} iteratively finds a spider with the best cost ratio, and shrinks it to a new terminal node. Concretely, each iteration consists of finding a vertex $v$ and a subset of terminals $S$ with $|S|\geq 2$ that minimize the following quotient cost function 
\[ \frac{c(v) + \sum_{u\in S}d^{()}(v, u)}{|S|} \label{eq:klein-cost}\]
where $d^{()}(v,u)$ is the cost of the minimum cost path between $v$ and $u$ not including costs of nodes $v$ and $u$, and $c(v)$ is the cost of node $v$.

\begin{proposition}\label{lem:S=2}
  For a given $v$ there exists a minimizer $S$ for the cost function (\autoref{eq:klein-cost}) with $|S|=2$.
\end{proposition}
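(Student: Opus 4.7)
My plan is to fix $v$ and minimize $g(S) := (c(v) + \sum_{u \in S} d^{()}(v,u))/|S|$ over subsets of terminals with $|S| \geq 2$. The first step is a routine exchange argument: if $u \in S$ and $u' \notin S$ satisfy $d^{()}(v,u') < d^{()}(v,u)$, then swapping $u$ for $u'$ strictly decreases $g$. So for any cardinality $m \geq 2$ the optimum $S$ of that size consists of the $m$ terminals closest to $v$. After sorting distances $d_1 \leq d_2 \leq \cdots \leq d_k$, proving the proposition reduces to showing that $f(m) := (c(v) + d_1 + \cdots + d_m)/m$ satisfies $f(2) \leq f(m)$ for every $m \geq 2$.

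Cross-multiplying by $2m$, this is equivalent to the inequality
\[
(m-2)\,\bigl(c(v) + d_1 + d_2\bigr) \;\leq\; 2\,(d_3 + d_4 + \cdots + d_m).
\]
The pure-distance portion is straightforward: since the sorted order gives $d_i \geq d_2 \geq (d_1+d_2)/2$ for every $i \geq 3$, summing yields $2\sum_{i=3}^m d_i \geq (m-2)(d_1+d_2)$. This settles the $(m-2)(d_1+d_2)$ summand on the left-hand side. What remains is to absorb the residual $(m-2)\,c(v)$ term.

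The main obstacle I foresee is precisely this leftover $(m-2)\,c(v)$ contribution: a large center weight $c(v)$ would let a spider amortize its fixed cost over many leaves and potentially beat any pair. Closing this gap must therefore exploit the specific structure of the reduced node-weighted Steiner instance produced by \autoref{thm:tree-cost-symmetric}. Concretely, I expect the argument is that the relevant candidate centers can be treated with $c(v) = 0$ (a terminal's own weight is already paid by the feasibility requirement to include it, so a terminal center incurs no additional charge), in which case the pure-distance bound above already establishes $f(2) \leq f(m)$ for all $m \geq 2$ and therefore exhibits a minimizer of cardinality two.
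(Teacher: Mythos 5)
Your exchange argument and the sorted-averaging bound $2\sum_{i=3}^m d_i \geq (m-2)(d_1+d_2)$ are exactly the content of the paper's own proof of \autoref{lem:S=2}, which consists of the single observation that ``the average of the two minimum values is not greater than the average of all the values.'' So the rigorous part of your attempt reproduces the intended argument, and you have correctly isolated the one thing it does not cover: the residual $(m-2)\,c(v)$ term. That term is a genuine problem, not a technicality. In the node-weighted instance produced by \autoref{thm:tree-cost-symmetric}, \emph{every} vertex of the host graph carries the cost $\phi(v)=-\log\pi[\lambda(v)]>0$, and the center of a spider in the Klein--Ravi algorithm is in general a Steiner (non-terminal) vertex; so your proposed escape --- treating the relevant centers as zero-cost because a terminal's weight is already paid --- does not apply. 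Concretely, if $c(v)>0$ and all terminals are equidistant from $v$ (all $d_i=d$), then $f(m)=c(v)/m+d<c(v)/2+d=f(2)$ for every $m>2$, so for a fixed $v$ the best spider can have strictly more than two legs and the proposition, read literally with $c(v)$ in the numerator, fails.

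The honest conclusion is that your attempt exposes a gap that is equally present in the paper: its one-line proof silently drops the $c(v)$ term. What actually supports the algorithm is the weaker statement proved in the appendix, namely that there \emph{exist} $\lfloor m/2\rfloor$ disjoint two-terminal wedges of total (differential) cost at most $2\,\OPT$, obtained by shortcutting a generalized-wedge decomposition of the optimal tree via the directed triangle inequality; a minimum-weight matching then adds at most $2\,\OPT$ per round and $2\log_2 k\cdot\OPT$ overall. That existence statement does not require that every center's best spider be two-legged. To salvage \autoref{lem:S=2} itself one would have to either restate the quotient without the $c(v)$ term or give an instance-specific argument bounding $c(v)$ against the leg costs; neither appears in the paper, and your proposal, as you yourself suspect, does not close this gap.
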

\begin{proof}
Consider a subset $S=\{u_1, \dots, u_{|S|}\}$ of terminals with more than 2 elements. Suppose that $d^{()}(v, u_1)\leq d^{()}(v, u_2) \leq \dots \leq d^{()}(v, u_{|S|})$. The average of the two minimum values is not greater than the average of all the values, that completes the proof.
\end{proof}

A spider with 2 terminals $u$ and $v$ is a path and its quotient cost function (\autoref{eq:klein-cost}) is equal to $d^{()}(u, v)/2$, which is half the cost of the path with minimum cost between terminals $u$ and $v$. Thus, \autoref{lem:S=2} shows that a modified version of the algorithm of \cite{klein1995nearly} that only searches for minimum cost paths between terminals provides the same performance guarantee. Note that in the modified version of the algorithm the cost of terminal nodes are assumed to be zero. The modified version of the algorithm in each iteration finds the minimum cost path between terminals, then contracts the path. This algorithm is exactly the Kruskal algorithm for the minimum spanning tree problem on a complete graph over the set of terminals with minimum distance between terminals as edge weights. 

Following theorem shows that for a Markov model, single edge paths are minimum cost paths between every two vertices. Note that, given two vertices, including or not including the cost of the two endpoints of the path does not change the optimum path between them.

\begin{theorem}[Directed triangle inequality]\label{thm:tri-ineq}
  For three vertices $a, b, c$ in a labeled tree, we have
  \[ \phi(a,b) + \phi(b,c) \geq \phi(a,c) \]
\end{theorem}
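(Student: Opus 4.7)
The inequality is equivalent, after applying $-\log$, to the multiplicative statement
\[
\sup_t P^t(a,c) \;\geq\; \sup_{t_1} P^{t_1}(a,b)\cdot \sup_{t_2} P^{t_2}(b,c),
\]
so the plan is to reduce to this form and then exhibit a single time parameter that realises the right-hand side inside the supremum on the left. The natural candidate is $t_1+t_2$, and the tool that connects it to a product is Chapman--Kolmogorov, applied sitewise.

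Concretely, the first step is to fix arbitrary $t_1,t_2\geq 0$ and, for each site $i$, use the Markov semigroup property
\[
\PM{i}{t_1+t_2}(a[i],c[i]) \;=\; \sum_{s\in\Sigma_i} \PM{i}{t_1}(a[i],s)\,\PM{i}{t_2}(s,c[i]) \;\geq\; \PM{i}{t_1}(a[i],b[i])\,\PM{i}{t_2}(b[i],c[i]),
\]
where the inequality simply drops all terms in the sum except $s=b[i]$ (all summands are nonnegative). Taking the product over $i=1,\dots,n$ and recalling that $P^t(x,y)=\prod_i \PM{i}{t}(x[i],y[i])$ turns this into the joint-sequence inequality $P^{t_1+t_2}(a,c)\geq P^{t_1}(a,b)\,P^{t_2}(b,c)$.

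The second step is to pass to suprema. For every $\epsilon>0$ choose $t_1,t_2$ with $P^{t_1}(a,b)\geq \sup_t P^t(a,b)-\epsilon$ and $P^{t_2}(b,c)\geq \sup_t P^t(b,c)-\epsilon$; then
\[
\sup_t P^t(a,c) \;\geq\; P^{t_1+t_2}(a,c) \;\geq\; P^{t_1}(a,b)\,P^{t_2}(b,c),
\]
and letting $\epsilon\downarrow 0$ yields the displayed multiplicative inequality. Applying $-\log$ and using the definition $\phi(u,v)=-\log\sup_t P^t(\lambda(u),\lambda(v))$ gives $\phi(a,c)\leq \phi(a,b)+\phi(b,c)$, as required. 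The same argument works verbatim whether the underlying Markov process is in discrete time ($t\in\Zplus$) or continuous time ($t\in\Rplus$), since all that is used is closure of the index set under addition.

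The only subtle point, and the place where one has to be a little careful, is the passage from the pointwise inequality $P^{t_1+t_2}(a,c)\geq P^{t_1}(a,b)P^{t_2}(b,c)$ to a statement about suprema when the suprema are not attained; the $\epsilon$-approximation above handles this cleanly, so no real obstacle remains. Note also that the proof nowhere uses reversibility or the stationary distribution $\pi$: the directed triangle inequality is a pure consequence of the Markov semigroup property together with nonnegativity of transition probabilities.
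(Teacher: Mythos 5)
Your proposal is correct and follows essentially the same route as the paper's proof: both reduce to the multiplicative inequality $\sup_t P^t(a,b)\cdot\sup_t P^t(b,c)\leq\sup_t P^t(a,c)$ and establish it via Chapman--Kolmogorov, bounding the product by the full sum over intermediate states and identifying it with $P^{t_1+t_2}(a,c)$. Your sitewise presentation and the explicit $\epsilon$-handling of non-attained suprema are only minor expository refinements of the paper's argument, which writes the same steps directly at the sequence level using $\sup_t f(t)\cdot\sup_{t'}g(t')=\sup_{t,t'}f(t)g(t')$.
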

\begin{proof}[Proof of \autoref{thm:tri-ineq}]
Let $x, y, z$ be the sequences assigned to nodes $a, b, c$, respectively. We show 
\[ \sup_t \P^t(x,y) \times \sup_t \P^t(y,z) \leq  \sup_t \P^t(x,z) \]
and the theorem follows directly by the definition of $\phi$ (\autoref{def:w-edge}).

\begin{align*}
  \sup_t \P^t(x,y) &\times \sup_t \P^t(y,z) \\
    &=  \sup_{t, t'} \P^t(x,y) \times \P^{t'}(y,z) \\
    & \leq \sup_{t, t'} \sum_{y' \in \Sigma^{\otimes_1^n}}  \P^t(x,y') \times \P^{t'}(y',z) \\
    & = \sup_{t, t'} \P^{t+t'}(x,z) \\
    & = \sup_{t} \P^{t}(x,z) %& \text{(\autoref{def:w-edge})}
\end{align*}
\end{proof}

Thus, we showed the following theorem, which is the main result of this paper.
\begin{theorem}\label{thm:method}
  \autoref{alg:phylogeography-mst} is a polynomial time $2\log_2 k$-approximation algorithm for the phylogeography problem.
\end{theorem}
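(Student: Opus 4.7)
The plan is to combine the structural reductions already established with the Klein--Ravi node-weighted Steiner tree approximation, then argue that in the Markov setting the algorithm collapses to Kruskal's minimum spanning tree on the terminal complete graph.

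First I would invoke \autoref{thm:tree-cost-symmetric} to rewrite the phylogeography objective as the cost of a node-weighted Steiner tree in the (exponentially large) graph whose vertices are all elements of $\Sigma^{\otimes_1^n}$, with node weights $\phi(v)$ and symmetric edge weights $w'(u,v)$. By the standard trick of subdividing each edge $(u,v)$ into a length-two path through a new node of weight $w'(u,v)$, the instance becomes a pure node-weighted Steiner tree instance, for which Klein and Ravi's algorithm \cite{klein1995nearly} is a $2\log_2 k$-approximation. So in principle one just applies their algorithm; the task is to implement it in polynomial time on a graph whose vertex set is infeasibly large.

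Second, I would argue that Klein--Ravi's iteration simplifies drastically. Each iteration of their algorithm selects a spider minimizing the quotient in \autoref{eq:klein-cost}. By \autoref{lem:S=2} we may restrict to $|S|=2$, so each spider is a minimum-cost path between two terminals. Next I would compute the cost of any candidate two-hop path $u_1 \to v \to u_2$ through a Steiner vertex $v$: using the definition of $w'$, this cost equals
\[
\tfrac12\bigl(\phi(u_1,v)+\phi(v,u_1)+\phi(v,u_2)+\phi(u_2,v)\bigr) - \tfrac12\phi(u_1) - \tfrac12\phi(u_2),
\]
while the direct edge $u_1u_2$ has cost $\tfrac12(\phi(u_1,u_2)+\phi(u_2,u_1)) - \tfrac12\phi(u_1) - \tfrac12\phi(u_2)$. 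Two applications of the directed triangle inequality \autoref{thm:tri-ineq} (in both directions) then yield that the direct edge is no worse, and by induction that no multi-hop path can beat it. Hence for every terminal pair the minimum-cost path in the full graph is exactly the direct edge, whose weight can be computed with a single oracle call to $\sup_t \P^t(x,y)$.

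Third, I would conclude that the restricted Klein--Ravi iteration is exactly Kruskal's algorithm on the complete graph over the $k$ terminals with edge weights $w'(u,v)$: at each step we pick the cheapest edge connecting two distinct components and merge. This gives $O(k^2)$ oracle evaluations plus $O(k^2 \log k)$ bookkeeping, hence polynomial running time, and inherits the $2\log_2 k$ approximation guarantee of Klein--Ravi. To return a feasible solution in the form required by \autoref{prob:phylo:markov} (a binary tree whose leaves carry exactly $\mathcal{S}$), I would attach each terminal that ended up as an internal vertex of the spanning tree to a fresh leaf carrying the same label; since $\sup_t \P^t(x,x) \geq 1$, such attachment edges carry cost $\leq 0$ and any necessary binary refinement can be done with zero-cost duplications, so the cost is unaffected.

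The main obstacle I expect is the second step: verifying that the Klein--Ravi subroutine truly degenerates to picking direct terminal-to-terminal edges. One has to rule out not only length-two paths through a Steiner center but arbitrarily long detours, and one has to check that allowing the spider's center $v$ to be a Steiner (non-terminal) vertex never produces a better quotient than centering the spider at a terminal. The triangle-inequality computation above handles length-two detours; extending it to arbitrary paths is an induction, and the quotient comparison reduces to the same inequality because making $v$ a terminal (with $\phi(v)$ absorbed into the symmetric weight $w'$) only decreases the numerator relative to a Steiner center of the same sequence.
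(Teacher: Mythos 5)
Your proposal is correct and follows essentially the same route as the paper: reduce to node-weighted Steiner via \autoref{thm:tree-cost-symmetric}, restrict Klein--Ravi spiders to $|S|=2$ by \autoref{lem:S=2}, use the directed triangle inequality \autoref{thm:tri-ineq} to show direct terminal-to-terminal edges are shortest paths, and observe the resulting procedure is Kruskal's MST on the terminal graph. Your explicit two-hop cost computation and the leaf-attachment/binarization fix are details the paper's main text leaves implicit (the latter appears in its appendix), but they are elaborations of the same argument, not a different one.
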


\begin{algorithm}
  \begin{algorithmic}[1]
    \Require{$\mathcal{S} = \{s_i\}$: a set of $k$ sequences of length $n$, functions $\phi(v)$ and $\phi(u,v)$ (\autoref{def:weight-cost}).}
    \Statex
      \State Build a complete weighted graph $G$ with $S$ as its vertices.
      \For{$v,u \in S$}
        \Let{costs $w(u,v)$}{$\phi(u,v) - \phi(v)$}
      \EndFor
      \State Find a minimum spanning tree $T$ of $G$ with edge cost $w$.
      \State Choose an arbitrary node $v$ as root and make the tree rooted at $v$.
      \State \Return{T}
  \end{algorithmic}
  \caption{Algorithm for the phylogeography and ancestral maximum-likelihood problems.}\label{alg:phylogeography-mst} 
\end{algorithm}

\section{Calculating weights for JC69 and phylogeography}
To apply the proposed method on the real data, we need to calculate weights $\phi(v, u)$ and $\phi(v)$. In this section, we show how to calculate these weights for JC69 and random walk models.

\subsection{Calculation of phylogeny weights for JC69 model}
Most common DNA evolution models are irreducible time reversible Markov models. JC69 \cite{jukes1969evolution}, K80 \cite{kimura1980simple}, K81 \cite{kimura1981estimation}, F81 \cite{felsenstein1981evolutionary}, HKY85 \cite{hasegawa1985dating}, T92 \cite{tamura1992estimation}, TN93 \cite{tamura1993estimation}, and GTR \cite{tavare1986some} are among the most popular ones. 

Based on \autoref{thm:method}, to obtain a $\lceil \log_2 k \rceil$-approximation algorithm for ancestral maximum-likelihood problem under these models, we need an oracle calculating values $\phi(v,u)$ and $\phi(v)$, or equally $\sup_t \prod_{i=1}^n \PM{i}{t}(x[i],y[i])$ and $\pi(x) = \prod_{i=1}^n \pi(x[i])$. In this section, we show how to implement these oracles for JC69 evolutionary model. 
%The stationary distribution for each sequence element could be calculated easily. 

%The transition matrix for time $t$ for a reversible Markov model is equal to $e^{tQ}$ for some square matrix $Q$. If $Q$ is symmetric, for example for JC69 model, $Q$ could be decomposed as $Q=U S U^\top$ for diagonal matrix $S$. Then, 
%\[ e^{tQ} = \sum_i \frac{1}{i!} Q^i t^i = U (\sum_i \frac{1}{i!} S^i t^i) U^\top = U e^{tS} U^\top = \sum_{j=1} e^{t\sigma_j} u_j u_j^\top\]
%For some orthonormal vectors $u_i$ and eigenvalues $\sigma_j \in \R$. 
%Thus, 
%\[\PM{i}{t}(x[i],y[i]) = \sum_{j=1} e^{t\sigma_j} u_j(x[i]) u_j(y[i]) \]
%Then,
%\[\sup_t \prod_{i=1}^n \PM{i}{t}(x[i],y[i]) = \sup_t\prod_{i=1}^n \sum_{j=1} e^{t\sigma_j} u_j(x[i]) u_j(y[i]) \]
%Note that since $P^t$ is the matrix for a ergodic(??) Markov model, for exactly one of eigenvalues $\sigma_j=0$, and the rest are less than zero.
%
%$\sup_t \prod_{i=1}^n \PM{i}{t}(x[i],y[i])$
%
%Calculating $\PM{i}{t}(x,y)$ for one locus $i$ and $x,y\in\Sigma_i$ is equal to calculating maximum value of element in row $x$ and column $y$ of $P^t$ for transition matrix $P$. For continuous Markov model, $P$ is decomposable as $P^t = $.
%
%https://math.stackexchange.com/questions/2807423/number-of-zeros-of-a-weighted-sum-of-exponentials

For JC69 model, we explicitly calculate $\PM{i}{t}(x,y)$ and then calculate $\sup_t \prod_{i=1}^n \PM{i}{t}(x[i],y[i])$. We have
\[\PM{i}{t}(x[i],y[i]) = \begin{cases} \frac14+\frac34 e^{\mu t}, \quad \text{if } x[i] = x[y], \\ \frac14-\frac14 e^{-\mu t}, \quad \text{o.w. } \end{cases} \]
where $\mu$ is the substitution rate parameter.
Thus, 
\[ \sup_t \prod_{i=1}^n \PM{i}{t}(x[i],y[i]) = \sup_t (\frac14+\frac34 e^{\mu t})^{n-d}\cdot (\frac14-\frac14 e^{-\mu t})^{d} \]
where $d$ is the Hamming distance between $x$ and $y$.
To obtain optimum value of $t$, we can check boundary values of $t$ ($0$ or $\infty$) as well as roots of following equation
\begin{align}
    \log (\frac14+\frac34 e^{\mu t})^{n-d} + \log (\frac14-\frac14 e^{-\mu t})^{d} = 0 \label{eq:JC69-der-0}
\end{align}
\autoref{eq:JC69-der-0} is equivalent to
\[ 3 d e^{\mu t} (1-e^{\mu t}) = (n-d) (1+3e^{\mu t}) \]
which is a quadratic equation for variable $\zeta := e^{\mu t}$ and its roots could be calculated easily. The simplified quadratic equation is
\[
% 3d e^{\mu t} - 3d e^{2\mu t} = n-d +3(n-d) e^{\mu t}
% 3d e^{2\mu t} - 3d e^{\mu t} + n-d + 3(n-d) e^{\mu t} = 0
3d \zeta^2 + 3(n - 2d) \zeta + n-d = 0
\]

%\[ t = \frac{1}{\mu}\log\left( \frac{-\sqrt{3} \sqrt{16 d_{x,y}^2 - 16 d_{x,y} n + 3 n^2} + 6 d_{x,y} - 3 n}{6 d_{x,y}}\right) \]
%inside logarith is not less than 1 (then $t\geq 0$) for 
%$b>1/2, (12 b^2 + 12 b - 1)/(6 (2 b + 1))<=n<=(4 b)/3$ or
%$b = 1/2, n = 2/3$
%
%\[ t = \frac{1}{\mu}\log\left( \frac{+\sqrt{3} \sqrt{16 d_{x,y}^2 - 16 d_{x,y} n + 3 n^2} + 6 d_{x,y} - 3 n}{6 d_{x,y}}\right) \]
%for $b>1/2, b<=n<=(4 b)/3$ or $1/6<b<=1/2, b<=n<=(12 b^2 + 12 b - 1)/(6 (2 b + 1))$.
%
%Second derivative is positive when
%$b>0, n = b, x>1$
%or 
%$b>0, b<n<(4 b)/3, e^(at)>(16 sqrt(3) b sqrt(-(b (b - n))/(4 b - 3 n)^2) - 12 sqrt(3) n sqrt(-(b (b - n))/(4 b - 3 n)^2) + 3 n)/(3 (4 b - 3 n))$

\subsection{Calculation of phylogeography weights for random walk model}
As it is mentioned earlier, the phylogeography problem could be formulated as an ancestral maximum-likelihood problem with sequence elements from a biological sequence Markov model, with one specific sequence element that follows a transportation model (see \autoref{sec:problem-phylogeography}). In this section, we assume that biological sequence elements follow JC96 model and the transportation model is a random walk on a graph $G$ with transition probability $P_G$. 

We show how to calculate edge weights with a small error which provides a $(2+\epsilon)\cdot \log_2 k$-approximation of the phylogeography for every $\epsilon > 0$, in polynomial time on $\log \epsilon$. 
Assume that $A \leq \sup_{t\in\Zplus} P_G^t(x,y) \leq B$ for some constants $0<A,B<1$. 
Also, let $\zeta$ be the value of edge $(x,y)$ that we would like to estimate:
$$\zeta := \sup_{t\in\Zplus} P_G^t(x,y)$$ 
We present the method through four steps, in the following subsections.
\begin{enumerate}[label=\Roman*)]
    \item For a given $\epsilon_1 > 0$, how to calculate $E_1(x,y)$ such that
        \begin{equation}
            E_1(x,y) = \zeta \pm \epsilon_1 \label{eq:phi-phygeo-SI}
        \end{equation}
    \item For a given $\epsilon_2$, based on the previous step, how to calculate $E_2(x,y)$ such that
        \begin{equation}
            E_2(x,y) := (1\pm \epsilon_2) \cdot \zeta \label{eq:phi-phygeo-SII}
        \end{equation}
    \item For a given $\epsilon_3 > 0$, based on the previous step, how to calculate $E_3(x,t)$ such that
        \begin{equation}
            E_3(x,t) := (1\pm \epsilon_3) (-\log{\zeta}) 
            \label{eq:phi-phygeo-SIII}
        \end{equation}
    \item For a given $\epsilon>0$, based on previous step, how to calculate $(2+\epsilon)\cdot \log_2 k$-approximation of the phylogeography problem.
\end{enumerate}

\subsubsection{Step I: Estimating \texorpdfstring{$\sup_{t\in\Zplus} P_G^t(x,y)$}{sup Pt(x,y)} with additive error}
Following theorem is the basis for calculating an approximation for $\zeta$.

\begin{theorem}(\cite[Theorem~5.1]{lovasz1993random}) \label{thm:approx-graph-pagerank}
For random walk on a graph with transition matrix $P_G$, every two vertices $a, b \in V(G)$, and every $t\in \Zplus$
\[ | P_G^t(a,b) - \pi(b) | \leq \sqrt{\max_{v,u\in V_G} \frac{\pi(v)}{\pi(u)}} \cdot \lambda^t \]
where $\pi$ is the stationary distribution, $\lambda = \min \{ |\lambda_2(N)|, |\lambda_n(N)|\}$\footnote{Non-bipartite connected graphs have $\lambda<1$.} is the maximum absolute value of eigenvalues of matrix $N = D^{-1/2} P_G D^{1/2}$ excluding the eigenvector corresponding to the stationary distribution, and $D$ is the diagonal matrix of weighted degrees of $V_G$.
\end{theorem}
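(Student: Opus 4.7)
This last theorem is a quoted classical mixing bound for reversible random walks (essentially the second-eigenvalue bound of Lovász), so my plan is to give the standard spectral proof. The main idea is to symmetrize the transition matrix $P_G$ through the degree matrix $D$, so that the spectral theorem applies and the top eigenvalue/eigenvector cleanly reproduce the stationary distribution, with the remaining spectrum contributing the decaying error term $\lambda^t$.

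First I would introduce the symmetric matrix $M := D^{1/2} P_G D^{-1/2}$ (which equals $D^{-1/2} W D^{-1/2}$ when $P_G = D^{-1}W$ for the weighted adjacency matrix $W$). Because $M$ is similar to $P_G$ and to the matrix $N$ appearing in the statement, all three share the same eigenvalues $1 = \lambda_1 \geq |\lambda_2| \geq \cdots \geq |\lambda_n|$, with the $i=1$ eigenvector of $M$ proportional to $(\sqrt{d(v)})_{v \in V(G)}$, equivalently $(\sqrt{\pi(v)})_{v}$ after normalization by $\sqrt{\mathrm{vol}(G)}$. Writing an orthonormal eigenbasis $u_1, \dots, u_n$ of $M$, I get $M^t = \sum_i \lambda_i^t u_i u_i^\top$, and hence
\begin{equation*}
  P_G^t(a,b) \;=\; \left(D^{-1/2} M^t D^{1/2}\right)(a,b) \;=\; \sqrt{\tfrac{d(b)}{d(a)}} \sum_{i=1}^n \lambda_i^t\, u_i(a)\, u_i(b).
\end{equation*}

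The next step is to isolate the $i=1$ term. A direct calculation with $u_1(v) = \sqrt{d(v)/\mathrm{vol}(G)}$ shows this contribution equals $d(b)/\mathrm{vol}(G) = \pi(b)$, so that $P_G^t(a,b) - \pi(b) = \sqrt{d(b)/d(a)} \sum_{i \geq 2} \lambda_i^t u_i(a) u_i(b)$. Bounding each $|\lambda_i|^t \leq \lambda^t$ for $i \geq 2$ and applying Cauchy--Schwarz plus the orthonormality identity $\sum_i u_i(a)^2 = 1$ gives $|\sum_{i\geq 2} u_i(a) u_i(b)| \leq 1$. Combining, $|P_G^t(a,b)-\pi(b)| \leq \sqrt{d(b)/d(a)}\cdot \lambda^t = \sqrt{\pi(b)/\pi(a)}\cdot \lambda^t$, which is bounded above by the stated maximum ratio $\sqrt{\max_{u,v} \pi(v)/\pi(u)} \cdot \lambda^t$.

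The only genuinely delicate point, and where I would spend care, is the symmetrization and the correct identification of the top eigenvector of $M$: getting the conjugation by $D^{\pm 1/2}$ on the right side, verifying that the $i=1$ contribution collapses to exactly $\pi(b)$ (rather than some constant multiple), and confirming that the similarity transformation used in the statement ($N = D^{-1/2} P_G D^{1/2}$) indeed has the same spectrum as my $M$. Once this bookkeeping is in place, the rest is just the spectral decomposition plus a one-line Cauchy--Schwarz.
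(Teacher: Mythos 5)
Your proof is correct. Note that the paper itself gives no proof of this statement: it is imported verbatim as Theorem 5.1 of Lov\'asz's survey on random walks, so there is nothing internal to compare against; your spectral argument (symmetrize $P_G$ by conjugating with $D^{1/2}$, expand $M^t$ in an orthonormal eigenbasis, identify the top-eigenvector contribution with $\pi(b)=d(b)/\mathrm{vol}(G)$, and kill the rest with Cauchy--Schwarz and $|\lambda_i|\le\lambda$ for $i\ge 2$) is exactly the standard proof of the cited result, and every step you sketch goes through. Two small bookkeeping remarks that you already half-anticipate: the matrix $N=D^{-1/2}P_GD^{1/2}$ as written in the statement is symmetric only under the column-stochastic convention $P_G=WD^{-1}$ (under the row-stochastic convention $P_G=D^{-1}W$ one should take $D^{1/2}P_GD^{-1/2}$, which is your $M$); either way the two are similar to each other and to $P_G$, so the spectrum and hence $\lambda$ are unaffected. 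Also, the statement's ``$\lambda=\min\{|\lambda_2(N)|,|\lambda_n(N)|\}$'' is a typo for $\max$ (consistent with the accompanying description ``maximum absolute value of eigenvalues \dots excluding'' the top one), and your bound $|\lambda_i|^t\le\lambda^t$ for $i\ge 2$ correctly uses the $\max$ version.
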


Based on \autoref{thm:approx-graph-pagerank}, let $t'$ be the smallest value for which $$\sqrt{\max_{v,u\in V_G} {\pi(v)}/{\pi(u)}} \cdot \lambda^{t'} \leq \epsilon_1\pi(b) $$
Then, for all $t\geq t'$, \autoref{eq:phi-phygeo-SI} is satisfied. Note that, $$t' = O\left(\frac1{\log \lambda}\log\left(\epsilon_1 \cdot \pi(b) / \sqrt{\max_{v,u\in V_G} {\pi(v)}/{\pi(u)}}\right) \right)$$.

So we estimate $\zeta$ via explicitly calculating the values $P_G^t(x,y)$ for $t<t'$, and use $\pi(y)$ as an estimate for $P_G^t(x,y)$ for $t\geq t'$.

%$E_1(x,y) := \max\{ \max_{t=0, \dots, t'-1} \{P^t(x,y)\}, \pi(y) \}$
 
\subsubsection{Step II: Estimating \texorpdfstring{$\sup_{t\in\Zplus} P_G^t(x,y)$}{sup Pt(x,y)} with multiplicative error}
For a given $\epsilon_2>0$, calculating $E_1(x,y)$ for $\epsilon_1 \leq A\epsilon_2$ via previous step, leads to
\begin{align*}
    E_1(x,y) &= \zeta \pm A \epsilon_2 \\
    &\leq \zeta \pm \epsilon_2 \zeta \\
    &= (1\pm \epsilon_2) \zeta
\end{align*}
which is the desired condition.

\subsubsection{Step III: Estimating \texorpdfstring{$-\log \sup_{t\in\Zplus} P_G^t(x,y)$}{sup Pt(x,y)} with multiplicative error}
We assume $\epsilon_3 \leq -\log {B}$. We calculate $E_2(x,y)$ for $\epsilon_2 = \frac12 \epsilon_3 \cdot (-\log{B})$. Thus,
\begin{align}
    -\log{E_2(x,y)} &\leq -\log{(1-\epsilon_2)} - \log{\zeta} \\
        &= -\log{(1-\frac12 \epsilon_3 \cdot (-\log{B}))} - \log{\zeta} \\
        &= \log (1/(1-\frac12 \epsilon_3 \cdot (-\log{B}))) - \log{\zeta} \\
        &\leq \log (1+\epsilon_3 \cdot (-\log{B})) - \log{\zeta} \label{eq:III-1}\\
        &\leq \log (e^{\epsilon_3 \cdot (-\log{B})}) - \log{\zeta} \label{eq:III-2} \\
        &\leq \epsilon_3 \cdot (-\log{B}) - \log{\zeta} \label{eq:III-3} \\
        &\leq (1+ \epsilon_3) \cdot (-\log{\zeta})
\end{align}
\autoref{eq:III-1} holds since $\frac{1}{1-p} \leq 1+2p$ for $p < 1/2$, \autoref{eq:III-2} holds because $1+x \leq e^x$, and \autoref{eq:III-3} holds since $\zeta\leq B$. 

On the other hand, 
\begin{align}
    -\log{E_2(x,y)} &\geq -\log{(1+\epsilon_2)} - \log{\zeta} \\
        &= -\log{(1+\frac12 \epsilon_3 \cdot (-\log{B}))} - \log{\zeta} \\
        &= \log (1/(1+\frac12 \epsilon_3 \cdot (-\log{B}))) - \log{\zeta} \\
        &\geq \log (1-\frac12\epsilon_3 \cdot (-\log{B})) - \log{\zeta} \label{eq:III-4}\\
        &\geq \log (e^{-\epsilon_3 \cdot (-\log{B})}) - \log{\zeta} \label{eq:III-5} \\
        &= -{\epsilon_3 \cdot (-\log{B})} - \log{\zeta} \label{eq:III-6} \\
        &\geq -{\epsilon_3 \cdot (-\log{\zeta})} - \log{\zeta} \\
        &= (1- \epsilon_3) \cdot (-\log{\zeta})
\end{align}
\autoref{eq:III-4} holds since $\frac{1}{1+p} \geq 1-p$, \autoref{eq:III-5} holds because $1-x/2 \geq e^{-x}$ for $0<x<1/2$\footnote{From Taylor expansion we have $e^{-x} = 1-x+\sum_{i=2}^\infty x^i/i!$. For $x<1/2$ the summation term is less than $x^2 \leq x/2$. Putting it in the Taylor expansion leads to $e^{-x} \leq 1-x+x/2 = 1-x/2$.}. Also, \autoref{eq:III-6} holds since $\zeta\leq B$. Which completes the desired condition for $E_3(x,y) := -\log{E_2(x,y)}$.

\subsubsection{Step VI: Phylogeny algorithm}
Let $\widetilde{\phi}(x,y)$, for $\widetilde{\phi}(x,y) = (1\pm \epsilon_3) \phi(x,y)$ be the estimate of $\phi(x,y)$ from the previous step with $\epsilon_3 < \frac18 \epsilon < 1$. We apply \autoref{alg:phylogeography-mst} on costs $\widetilde{\phi}(x,y)$ and obtain $\widetilde{T}$. 

Let $c(T, \phi)$ be the cost of any tree $T$ with weight function $\phi$ and $\overline{T}$ be the output of \autoref{alg:phylogeography-mst} on the graph with exact costs $\phi$. Thus,
\begin{align*}
    c(\widetilde{T}, {\phi}) &\leq \frac{1}{1-\epsilon_3} \cdot c(\widetilde{T}, \widetilde{\phi}) \\
        &\leq \frac{1}{1-\epsilon_3} \cdot c(\overline{T}, \widetilde{\phi}) \\
        &\leq \frac{1+\epsilon_3}{1-\epsilon_3} \cdot c(\overline{T}, {\phi}) \\ 
        &\leq \frac{1+\epsilon_3}{1-\epsilon_3} \cdot2\log_2 k\cdot\OPT \\
        &\leq (1+\epsilon/2)\cdot 2\log_2 k\cdot\OPT
\end{align*}
The first inequality holds since $$(1-\epsilon_3)\phi(x,y) \leq \widetilde{\phi}(x,y)$$
The second one holds since the tree $\widetilde{T}$ is minimum with weight function $\widetilde{\phi}$. The third inequality is true since $$\widetilde{\phi}(x,y) \leq (1+\epsilon_3)\phi(x,y)$$ 
The fourth one holds by \autoref{thm:method} and $$({1+\epsilon_3})/({1-\epsilon_3}) \leq 1+4\epsilon_3 \leq 1+\epsilon/2$$ 
Thus, our method given values $\widetilde{\phi}$ instead of $\phi$, finds a $(2+\epsilon) \log_2 k $-approximation for the phylogeography problem.

% Let $\widetilde{\phi}(x,y)$, for $\widetilde{\phi}(x,y) = (1\pm \epsilon_3) \phi(x,y)$ be the estimate of $\phi(x,y)$ from the previous step with $\epsilon_3 < \frac14 \epsilon < 1$. We apply \autoref{alg:phylogeography-mst} on costs $\widetilde{\phi}(x,y)$ and obtain $\widetilde{M}$. 
% 
% Let $c(M, \phi)$ be the cost of any matching $M$ with weight function $\phi$ and $\overline{M}$ be the minimum matching on graph with exact costs $\phi$. Thus,
% \begin{align*}
%     c(\widetilde{M}, {\phi}) &\leq \frac{1}{1-\epsilon_3} \cdot c(\widetilde{M}, \widetilde{\phi}) \\
%         &\leq \frac{1}{1-\epsilon_3} \cdot c(\overline{M}, \widetilde{\phi}) \\
%         &\leq \frac{1+\epsilon_3}{1-\epsilon_3} \cdot c(\overline{M}, {\phi}) \\ 
%         &\leq \frac{1+\epsilon_3}{1-\epsilon_3} \cdot\lceil \log_2 k\rceil\cdot\OPT \\
%         &\leq (1+\epsilon)\cdot\lceil \log_2 k\rceil\cdot\OPT
% \end{align*}
% The first inequality holds since $$(1-\epsilon_3)\phi(x,y) \leq \widetilde{\phi}(x,y)$$
% The second one holds since the matching $\widetilde{M}$ is minimum with weight function $\widetilde{\phi}$. The third inequality is true since $$\widetilde{\phi}(x,y) \leq (1+\epsilon_3)\phi(x,y)$$ 
% The fourth one holds by \autoref{thm:method} since  $$({1+\epsilon_3})/({1-\epsilon_3}) \leq 1+4\epsilon_3 \leq 1+\epsilon$$ 
% Thus, our method given values $\widetilde{\phi}$ instead of $\phi$, finds a $(1+\epsilon) \lceil \log_2 k \rceil$-approximation for the phylogeography problem.

\subsubsection{Extending random walk model to phylogeography calculation}
While we can calculate or estimate $\P^t_G(a,b)$ values, the values for biological sequences could be calculated likewise. For example for the JC69 DNA evolution model, the transition matrix $\P^t$ converges exponentially to $\P^\infty$. With similar reasoning as for the transportation matrix, it could be shown that the matrix $\P^t$ could be calculated for small values of $t$ and for large values we can approximate $\P^t$ with $\P^\infty$.

\section{Conclusion}
In this paper we presented a $2\log_2 k$-approximation algorithm for the ancestral maximum-likelihood problem for time reversible Markov processes given access to functions calculating 1) stationary distribution for a state (sequence), 2) probability of transition between two states (sequences) in a time interval which maximize this probability. Then, we showed how to use this method to provide a $2\log_2 k$-approximation algorithm for JC69 model as well as a $(2+\epsilon)\log_2 k$-approximation algorithm for the phylogeography problem with JC69 sequence evolution model and random walk geographic transportation model.

The method which is provided could be easily used for more general DNA evolution models, even for the most general, i.e., GTR model. For any model the problem would be reduced to calculation of maximum transition probability between two sequences, which could be done through calculation of roots of a function.

%The method could be work for the objective function which does not include the $\P[\root(T)]$ ??? (WE SHOULD CHECK IT)

The case that the stationary distribution is the uniform distribution, which is the case for the symmetric Markov process models, the phylogeny inference problem is much easier. In fact, the problem is reduced to undirected Steiner tree which could be solved through the method provided by Alon et al. \cite{alon2008approximate} to achieve a constant factor approximation algorithm.

%We also provide an inapproximability result for the phylogeny problem, even for the simple case of sequences from a random walk on a weighted undirected graph. We showed that there is no $(1-o(1)) \ln k$ approximation algorithm for this problem unless NP has quasi-polynomial time algorithms. 

%There is a gap between the approximation factor for the provided algorithm and the lower bound we proved via the inapproximability theorem. The lower bound is 
%$\frac{1-o(1)}{\log_2 e} \log_2 k$ and the upper bound is $\log_2 k$. 

It is not hard to see that the more general case in which the Markov model of evolution is not reversible could be reduced to a directed Steiner tree problem on a huge graph of the state space, which we know does not admit any $\log^{2-\epsilon} k$-approximation for any constant $\epsilon>0$ \cite{halperin2003polylogarithmic}. Note that, general approximation algorithms for the directed Steiner problem, for example the $k^\epsilon$-approximation algorithm \cite{charikar1999approximation}, may not be applicable for the ancestral maximum-likelihood problem since the underlying graph is exponentially large. 

\bibliographystyle{apalike}
\bibliography{main} 

\end{document}